\pgfplotsset{compat=1.14}
\pgfplotsset{every tick label/.append style={font=\footnotesize}}
\newcolumntype{R}{>{\raggedleft\arraybackslash}X}
\newcolumntype{L}{>{\raggedright\arraybackslash}X}
\newcolumntype{C}{>{\centering\arraybackslash}X}
\newcolumntype{M}[1]{>{\centering\arraybackslash}m{#1}}
\newcolumntype{K}{>{\columncolor{gray!20}}C}
\newcolumntype{k}{>{\columncolor{gray!20}}c}
\newlength{\tablen}
\newcolumntype{.}{D{.}{.}{-1}}
\renewcommand\p@subfigure{\arabic{figure}.}
\renewcommand\p@subtable{\arabic{table}.}
\setlist[itemize]{leftmargin=2.5\parindent}
\setlist[enumerate]{leftmargin=2.5\parindent}
\def\addlegendimage{\csname pgfplots@addlegendimage\endcsname}
\theoremstyle{plain}
\newtheorem{proposition}{Proposition}%[section]
\theoremstyle{definition}
\newtheorem{axiom}{Axiom} %[section]
\newtheorem{definition}{Definition}%[section]
\newtheorem{example}{Example}[section]
\theoremstyle{remark}
\def\keywords{\vspace{.5em} % Add keywords
{\noindent \textit{Keywords}: }}
\def\JEL{\vspace{.5em} % Add keywords
{\noindent \textbf{\emph{JEL} classification number}: }}
\def\AMS{\vspace{.5em} % Add keywords
{\noindent \textbf{\emph{MSC} class}: }}
\author{
\href{https://sites.google.com/view/laszlocsato}{L\'aszl\'o Csat\'o}\thanks{~Corresponding author. E-mail: \emph{laszlo.csato@sztaki.hu}}}
\affil{Institute for Computer Science and Control (SZTAKI) \\
Laboratory on Engineering and Management Intelligence, Research Group of Operations Research and Decision Systems}
\affil{Corvinus University of Budapest (BCE) \\
Department of Operations Research and Actuarial Sciences}
\affil{Budapest, Hungary}
\title{Coronavirus and sports leagues: obtaining \\ a fair ranking when the season cannot resume}
\date{\today}
\def\Dedication{ % Add keywords
{\noindent $\mathfrak{Sollen}$ $\mathfrak{wir}$ $\mathfrak{aus}$ $\mathfrak{der}$ $\mathfrak{Geschichte}$ $\mathfrak{lernen}$, $\mathfrak{so}$ $\mathfrak{m\ddot{u}ssen}$ $\mathfrak{wir}$ $\mathfrak{die}$ $\mathfrak{Dinge}$, $\mathfrak{weiche}$ $\mathfrak{sich}$ $\mathfrak{wirklich}$ $\mathfrak{zugetragen}$ $\mathfrak{haben}$, $\mathfrak{doch}$ $\mathfrak{auch}$ $\mathfrak{f\ddot{u}r}$ $\mathfrak{die}$ $\mathfrak{Folge}$ $\mathfrak{als}$ $\mathfrak{m\ddot{o}glich}$ $\mathfrak{ansehen}$.}\footnote{~``\emph{If we are to learn from history, we must look upon things which have actually happened as also possible in the future.}'' (Source: Carl von Clausewitz: \emph{On War}, Book 8, Chapter 8---Limited Object---Defence, translated by Colonel James John Graham, London, N. Tr\"ubner, 1873. \url{http://clausewitz.com/readings/OnWar1873/TOC.htm})}
%\vspace{0.25cm}

\flushright
\noindent (Carl von Clausewitz: \emph{Vom Kriege})

\vspace{1cm} 
\justify }
\begin{document}
%\newgeometry{top=20mm,bottom=20mm,left=25mm,right=25mm}

\maketitle
\thispagestyle{empty}
\Dedication

\begin{abstract}
\noindent
Many sports leagues are played in a tightly scheduled round-robin format, leaving a limited time window to postpone matches. If the season cannot resume due to an external shock such as the coronavirus pandemic in 2020, the ranking of the teams becomes non-trivial: it is necessary to account for schedule imbalances and possibly for the different number of matches played. First in the literature, we identify a set of desired axioms for ranking in these incomplete tournaments. It is verified that the generalised row sum, a parametric family of scoring rules, satisfies all of them. In particular, the well-established least squares method maximizes the influence of the strength of opponents on the ranking.
Our approach is applied for six major premier European soccer competitions, where the rankings are found to be robust concerning the weight of adjustment for the strength of the opponents. Some methodologically simpler alternative policies are also discussed.
%Since disregarding the majority of the matches without promotion, relegation, and qualification for international cups would be unfair and unjustified, some simpler alternative policies are also discussed.

\keywords{axiomatic approach; OR in sports; ranking; round-robin tournament; soccer}

\AMS{15A06, 91B14}
% Linear equations (linear algebraic aspects)
% Social choice

\JEL{C44, Z20}
% Operations Research, Statistical Decision Theory
% Sport Economics: General
\end{abstract}

\clearpage
%\newgeometry{top=25mm,bottom=25mm,left=25mm,right=25mm}

\section{Introduction} \label{Sec1}

The pandemic of coronavirus disease 2019 (COVID-19) stopped almost all sports leagues around the world. Some of them never resumed: the German premier men's handball league, the \href{https://en.wikipedia.org/wiki/2019\%E2\%80\%9320_Handball-Bundesliga}{2019/20 Handball-Bundesliga}, was cancelled on 21 April 2020 \citep{Web24_2020}; the Dutch soccer league, the \href{https://en.wikipedia.org/wiki/2019\%E2\%80\%9320_Eredivisie}{2019/20 Eredivisie} ended with immediate effect on 24 April 2020 \citep{SkySports2020}; and the French Prime Minister announced on 28 April 2020 that the 2019/20 sporting season is over \citep{BBC2020a}.
Therefore, several organisers have faced an unenviable dilemma: how to decide the final ranking in the 2019/20 season? Should all results be abandoned? Should the current standing be frozen? Should a reasonable subset of the matches be considered?
Since the sports industry is a billion dollars business, the answer has huge financial consequences as promotion and relegation, qualification for international cups, and the allocation of broadcasting revenue \citep{BergantinosMoreno-Ternero2020a} all depend on the league ranking.

%Some executives, for instance, the president of the French soccer club Olympique Lyonnais, promotes disregarding the whole season without announcing the champion, as well as deleting promotion and relegation \citep{Hernandez2020}. This implies that the next season of international cups---in European soccer, the UEFA Champions League and UEFA Europa League---should start with the same participants as in 2019/20. The solution obviously favours some teams. Olympique Lyonnais has been only the seventh after playing 28 matches from a total of 38, with a small chance to qualify for the UEFA Champions League on the field.

For instance, the Hungarian Handball Federation cancelled all results in the 2019/20 season \citep{Stregspiller2020}. The implications for the \href{https://en.wikipedia.org/wiki/2019\%E2\%80\%9320_Nemzeti_Bajnoks\%C3\%A1g_I_(women\%27s_handball)}{2019/20 Hungarian women's handball league} were quite dramatic.
With eight matches left to play from the total of 26, the runner-up Si\'ofok had an advantage of two points---the reward of a win---ahead of the third-placed Ferencv\'aros. However, based on the result of the previous 2018/19 season, Ferencv\'aros obtained the second slot in the 2020/21 EHF Champions League for Hungary. Thus Si\'ofok, which finished third in the 2018/19 season, can enter only the less lucrative and prestigious EHF European League (called EHF Cup until the 2019/20 season), a competition it already won in 2019. Unsurprisingly, the strange decision ignited some controversy \citep{Stregspiller2020}, especially because Gy\H{o}ri Audi ETO KC---the Hungarian club that won the last three editions of the \href{https://en.wikipedia.org/wiki/Women\%27s_EHF_Champions_League}{Women's EHF Champions League}, the most prestigious club tournament in women's handball---was defeated by Si\'ofok on 22 January 2020, the first loss of Gy\H{o}r after more than two years.

%However, as the matches have usually been scheduled between the autumn of 2019 and the spring of 2020, more than of half of them has already been played.

On the other hand, the best 36 German men's handball clubs voted by a large majority to evaluate the 2019/20 season according to the so-called quotient rule, the number of points scored per game \citep{Web24_2020}. In addition, the top two clubs were promoted from the second division but there was no relegation.
Even though F\"uchse Berlin dropped from the fifth to the sixth place due to playing one match more despite its one point advantage over Rhein-Neckar L\"owen, and thus can play in the EHF European League only due to the withdrawal of the fourth club TSV Hannover-Burgdorf, its managing director acknowledged that there is no fair solution in this situation.

The current paper aims to propose a fair ranking in similar incomplete round-robin tournaments, where the main challenge is how to take into account the varying strength of the opponents. The analysis is based on preserving much of the theoretical properties of the standard ranking in round-robin leagues. Therefore, the novelty of our research resides in its \emph{axiomatic approach}, which has not been followed in this context to our knowledge.
On the other hand, the social literature has extensively discussed similar issues \citep{ChebotarevShamis1997a, ChebotarevShamis1998a, Gonzalez-DiazHendrickxLohmann2014, Kitti2016, Rubinstein1980, SlutzkiVolij2005, SlutzkiVolij2006}. \citet{VaziriDabadghaoYihMorin2018} consider the conditions guaranteeing the fairness and comprehensiveness of sports ranking methods, while \citet{Berker2014} analyses the theoretical properties of tie-breaking rules. Axioms for penalty shootout rules are also widely studied \citep{BramsIsmail2018, LambersSpieksma2020a, Csato2021c}, and \citet{Csato2021a} presents tournament design from the perspective of fairness and incentive compatibility.

A family of scoring rules for preference aggregation, the generalised row sum method, is found to satisfy all axioms that remain compatible with each other. It depends on a parameter responsible for the adjustment due to the varying strength of the opponents. Maximising the influence of the opponents leads to the least squares method, a well-established procedure in social choice theory and statistics. According to our computations for six major European soccer leagues, the current ranking---after correcting for the number of matches played as in the 2019/20 Handball-Bundesliga---remains robust to variation in the strength of the opponents.

The problem of predicting the winner of an interrupted game has a long history since the correspondence of \emph{Blaise Pascal} and \emph{Pierre de Fermat} \citep[Appendix: The Pascal--Fermat Correspondence of 1654]{Weisberg2014}. There exist many statistical techniques to determine the number of points that a team would have won on average in the rest of the season. However, sports administrators are usually not keen to grab such sophisticated mathematical models, even though cricket has adopted the \href{https://en.wikipedia.org/wiki/Duckworth\%E2\%80\%93Lewis\%E2\%80\%93Stern_method}{Duckworth--Lewis--Stern method} \citep{DuckworthLewis1998, Stern2009, Wright2009}.
Our paper is also not the first attempt to produce a final ranking in incomplete sports tournaments. For example, the French mathematician \emph{Julien Guyon} suggests an Elo-based method in the mainstream media \citep{Guyon2020b, Guyon2020c}, while, inspired by \citet{Landau1914}, \citet{LambersSpieksma2020b} apply an eigenvector-based approach. \citet{GorgiKoopmanLit2020} propose a procedure to predict the remaining non-played matches through a paired-comparison model, and \citet{VanEetveldeHvattumLey2021} introduce a stochastic tool to derive the probabilities for the various possible final rankings. \citet{BeggsBondEmmondsJones2019} analyse historical partial standings from English soccer to evaluate their predictive ``power''.

However, none of the above works have used axiomatic arguments for their proposals. This inspired us to look for a simple solution that retains most properties of the usual ranking in round-robin tournaments as the majority of forecasting models do not guarantee these axioms. Our proposal is based exclusively on the number of points as well as the structure of the matches already played, and has the following advantages:
\begin{itemize}
\item
It takes into consideration all results from the unfinished season;
\item
It accounts for the different number of matches played as well as for the fact that the teams played against different sets of opponents with varying strengths;
\item
It does not depend on the form, home-away pattern, injuries, or results in the preceding season(s);
\item
It seldom requires further tie-breaking criteria \citep{Csato2017c};
\item
It is suitable for any sport and does not call for the estimation of any parameter;
\item
It has a recursive formula where the first iteration still yields an adequate solution, consequently, it requires no specific software, and the ranking can be computed essentially by hand.
\end{itemize}

On the other hand, it should be acknowledged that no reasonable prediction is provided for the final number of points. In our opinion, that is not a serious problem as only the ranking of the teams is required to decide promotion, relegation, and qualification to other tournaments.

The rest of the article proceeds as follows.
Section~\ref{Sec2} studies some properties of the usual ranking in round-robin tournaments and discusses their relevance when all matches cannot be played. The generalised row sum method is introduced in Section~\ref{Sec3} and shown to be compatible with the desired axioms. Section~\ref{Sec4} contains its application for six major European soccer leagues. Section~\ref{Sec5} summarises policy implications, while Section~\ref{Sec6} offers concluding remarks.

\section{Axioms} \label{Sec2}

As the first step, we present some conditions that are satisfied by the ranking obtained from the number of points in a round-robin tournament. This collection aims to assess whether they can be guaranteed if the league cannot resume after some matches are played. Precise mathematical formalization is avoided in order to be more accessible for the general audience.

\begin{axiom}
\emph{Independence of irrelevant matches}:
The relative ranking of two teams should be independent of any matches between other teams.
\end{axiom}

Obviously, the number of points scored by any team does not depend on the results of other teams. However, if head-to-head results are used for tie-breaking, this property does not necessarily hold. \citet{Berker2014} provides an extensive analysis of tie-breaking in round-robin soccer tournaments.

\begin{axiom}
\emph{Self-consistency}:
Two teams should have the same rank if they achieved the same results against teams having the same strength. Furthermore, a team should be ranked strictly higher than another team if one of the following conditions hold:
\begin{itemize}
\item
it achieved better results against opponents having the same strength;
\item
it achieved the same results against stronger opponents;
\item
it achieved better results against stronger opponents.
\end{itemize}
\end{axiom}
Self-consistency imposes some constraints on the ranking of the teams such that the constraints are determined by this ranking itself. \citet[Section~3.2]{Csato2019d} offers a detailed explanation of the axiom.

The exact meaning of ``better results'' against another team can be specified by the decision-maker, however, since the final ranking is usually determined by the number of points scored, a reasonable definition is more points scored. Then having better results implies a higher score, thus the ranking given by the number of points satisfies the first and the third requirements of self-consistency. 
Proving that the second condition also holds is somewhat more complicated.
Consider two teams $i$ and $j$. Assume for contradiction that team $i$ has the same results against stronger opponents compared to team $j$ but team $i$ is ranked weakly below team $j$. In a round-robin tournament, their set of opponents almost coincide but team $i$ is an opponent of team $j$ and vice versa. Consequently, the initial assumption cannot hold because team $i$ faced stronger opponents than team $j$.
% Hence team $i$ and team $j$ should have the same rank because they have the same number of points.

If the tournament is not round-robin, then there exists no ranking method simultaneously satisfying independence of irrelevant matches and self-consistency \citep[Theorem~1]{Csato2019d}. The impossibility result holds under various domain restrictions. Self-consistency is a more important property than independence of irrelevant matches, the latter should be weakened to retain possibility \citep{Csato2019d}.
%An appropriate version is macrovertex independence \citep{Chebotarev1994, Csato2019d} and macrovertex autonomy \citep{Csato2019d}, which can be even extended to bridge set independence and bridge set autonomy \citep{CsatoToth2020}. The family of ranking methods to be proposed later  satisfies these properties.

\begin{axiom}
\emph{Invariance to cycles}:
Assume that team $i$ defeated team $j$, team $j$ defeated team $k$, and team $k$ defeated team $j$. The ranking should be independent of reversing these results such that team $j$ wins against team $i$, team $k$ wins against team $j$, and team $i$ wins against team $k$.
\end{axiom}

%Naturally, the points scored by any team do not change if the results are reversed along such a cycle.
Such cycles are responsible for the basic challenges of ranking. While they certainly cannot be deleted, it would be unjustified to modify the ranking after reversing their direction. Most statistical methods are unlikely to guarantee invariance to cycles.

\begin{axiom}
\emph{Home-away independence}:
The ranking of the teams should not change if the field of all matches is reversed: if the game was played by team $i$ against team $j$ at home, then it is considered as team $i$ against team $j$ away, while the outcome remains unchanged.
\end{axiom}

The number of points does not take into account whether a game is played at home or away (tie-breaking rules in some leagues do). It is a natural requirement in a round-robin tournament where the teams usually play once at home and once away, however, this home-away balance does not hold if the league is suspended.
For example, Rennes, the third team in the 2019/20 French Ligue 1 after 28 rounds, played only at home against the dominating team Paris Saint-Germain, while the fourth Lille lost against Paris Saint-Germain both at home and away. On the other hand, Rennes would have to play five, and Lille would have to play six games away in the remaining 10 rounds.

There are at least two arguments for home-away independence even in incomplete round-robin tournaments. First, according to our knowledge, this feature is not taken into account in any European soccer league even though it would make sense in certain formats. In the 2019/20 season, the leagues in Hungary and Kosovo are organised with 12 teams playing a triple round-robin tournament, hence some teams play against others twice at home and once away, or vice versa. In addition, certain teams play one match fewer at home than other teams. The same format was used in Finland until 2018.
In Northern Ireland and Scotland, the 12 teams play a triple round-robin tournament in the initial stage. After 33 games, the league is split into two sections of six teams each such that every team plays once more all the five teams in their section. Consequently, it is impossible to balance the home-away pattern by guaranteeing that each club plays twice at home and twice away against any other club in its section. While these unbalanced schedules generate interesting research problems, for example, concerning the measurement of competitive balance \citep{Lenten2008}, the number of points---and further tie-breaking rules---are not adjusted for the inherent inequality.

Second, accounting for home advantage requires a statistical estimation of at least one parameter, which will certainly result in debates around the exact methodology and the sample used. As the above examples illustrate, it is improbable that the decision-makers want to instigate such controversies.

\begin{axiom}
\emph{Consistency}:
As the number of rounds played increases, the ranking should converge to the ranking given by the number of points.
In particular, the ranking should coincide with the ranking given by the number of points when the competition is finished.
\end{axiom}

Consistency is perhaps the most essential property of ranking the teams in a suspended league. The number of points has no competitive alternative in the real-world as the primary ranking criterion. It is also important to use a procedure that satisfies consistency due to its innate characteristics, not merely because it is defined separately for incomplete and complete round-robin tournaments. For example, this condition does not hold for the eigenvector-based method proposed in \citet{LambersSpieksma2020b}.

To summarise, five properties of the common ranking method in round-robin tournaments have been presented. Independence of irrelevant matches cannot be expected to hold if the league is suspended. Thus we are seeking a method satisfying self-consistency, invariance to cycles, home-away independence, and consistency. 

\section{A reasonable family of ranking methods} \label{Sec3}

Our proposal for ranking in incomplete round-robin tournaments requires two inputs: the \emph{points vector} $\mathbf{p}$ and the symmetric \emph{matches matrix} $\mathbf{M}$. In particular, $p_i$ is the number of points for team $i$, whereas $m_{ij}$ equals the number of matches between teams $i$ and $j$.

The methods are based on the \emph{normalised points vector} $\mathbf{s}$ with its entries summing up to zero. This is straightforward in sports where a win plus a loss is equivalent to two draws since wins can be awarded by $\alpha > 0$, draws by zero, and losses by $-\alpha < 0$.
But some sports do not follow this symmetric setting, mainly because of the three points rule, which has become the standard in soccer: wins earn three points, draws earn one point, and losses earn zero points. It has many deficiencies from a theoretical point of view, for instance, reversing all results does not necessarily reverse the ranking and the number of total points allocated depends not only on the number of matches played. But teams should be definitely ranked even in these leagues.

Therefore, similarly to the \emph{quotient rule} used in the unfinished 2019/20 Handball-Bundesliga to determine the final ranking, the average score per game is calculated for each team by dividing its number of points by its number of matches played. These values are normalised such that the average quotient is subtracted from the quotient of each team to obtain $s_i$. Note that if two teams $i,j$ have played the same number of matches and $p_i = p_j$, then their normalised points are also equal, $s_i = s_j$.

\begin{example} \label{Examp1}
Eintracht Frankfurt has played $24$ matches and scored $p_i = 28$ points in the 2019/20 German Bundesliga until 13 March 2020. Therefore, it has scored $28 / 24 \approx 1.167$ points per game. The sum of these quotients for all the $18$ teams is $25$, thus the normalised point of Eintracht Frankfurt is $s_i = 28/24-25/18 \approx -0.222$. Hertha BSC has also scored $28$ points but over $25$ matches, hence this club has a slightly lower normalised point of $-0.269$.
\end{example}

Although normalisation of the scores considers the possibly different number of matches played by the teams, it still does not reflect the strength of the opponents.

\begin{table}[t]
  \centering
  \caption{The remaining matches of two teams in the 2019/20 German Bundesliga}
  \label{Table1}
  \rowcolors{1}{gray!20}{}
    \begin{tabularx}{\textwidth}{Lc m{1cm} Lc} \toprule \hiderowcolors
    \multicolumn{2}{c}{\textbf{Borussia Dortmund}} & & \multicolumn{2}{c}{\textbf{RB Lepzig}} \\ 
    Opponent & Points & & Opponent & Points \\ \bottomrule \showrowcolors
    Hertha BSC & 28    & & FC Augsburg & 27 \\
    Fortuna Düsseldorf & 22    & & Hertha BSC & 28 \\
    1899 Hoffenheim & 35    & & Borussia Dortmund & 51 \\
    RB Leipzig & 50    & & Fortuna Düsseldorf & 22 \\
    Mainz 05 & 26    & & SC Freiburg & 36 \\
    Bayern Munich & 55    & & 1899 Hoffenheim & 35 \\
    SC Paderborn & 16    & & 1.\ FC Köln & 32 \\
    Schalke 04 & 37    & & Mainz 05 & 26 \\
    VfL Wolfsburg & 36    & & SC Paderborn & 16 \\ \toprule
    \textbf{Sum} & \textbf{305} & &  \textbf{Sum} & \textbf{273} \\ \bottomrule
    \end{tabularx}
\end{table}

\begin{example} \label{Examp2}
Table~\ref{Table1} compares the teams against which Borussia Dortmund and RB Leipzig should have played in the remaining nine rounds of the 2019/20 German Bundesliga. Dortmund had clearly a more difficult schedule ahead than Leipzig as the points scored by its future opponents in the previous rounds were higher by more than 11.7\% (all of these teams played 25 matches). However, Dortmund had a lead of one point over Leipzig. Is it sufficient to rank Dortmund higher than Leipzig?
\end{example}

The strength of the schedule can be taken into account through the matches matrix $\textbf{M}$. The matches already played can be represented by an undirected graph, where the nodes correspond to the teams and the weight of any edge is determined by the number of matches played by the associated teams. The \emph{Laplacian matrix} $\mathbf{L} = \left[ \ell_{ij} \right]$ of this graph is defined such that $\ell_{ij} = -m_{ij}$ for all $i \neq j$ and $\ell_{ii}$ is equal to the number of matches played by team $i$.
Furthermore, let $\mathbf{I}$ be the matrix with ones in the diagonal and zeros otherwise, and $\mathbf{e}$ be the vector with $e_i = 1$ for all $i$.

\begin{definition}
\emph{Generalised row sum}: the generalised row sum rating vector $\mathbf{x}(\varepsilon)$ is given by the unique solution of the system of linear equations
\[
\left[ \mathbf{I} + \varepsilon \mathbf{L} \right] \mathbf{x}(\varepsilon) = \mathbf{s},
\]
where $\varepsilon > 0$ is a parameter.
\end{definition}

The generalised row sum method was introduced in \citet{Chebotarev1989_eng} and \citet{Chebotarev1994}. It adjusts the normalised points by considering the performance of the opponents, the opponents of the opponents, and so on.
Parameter $\varepsilon$ quantifies the degree of this modification. The ranking induced by $\mathbf{x}(\varepsilon)$ converges to the ranking from the number of points as $\varepsilon \to 0$, hence the normalised point is a limit of generalised row sum.
Uniqueness comes from the fact that the matrix $\mathbf{I} + \varepsilon \mathbf{L}$ is invertible.
\citet[Section~4]{Chebotarev1994} demonstrates that $\mathbf{x}(\varepsilon)$ can be regarded as the prediction of the number of points at the end of the season after all omitted comparisons.

\begin{definition}
\emph{Least squares}: the least squares rating vector $\mathbf{q}$ is given by the solution of the system of linear equations $\mathbf{e}^\top \mathbf{q} = 0$ and 
\[
\mathbf{L} \mathbf{q} = \mathbf{s}.
\]
\end{definition}

The least squares rating is unique if and only if any team can be compared with any other team at least indirectly, through other teams. This is a natural constraint---otherwise, there exist two subsets of the teams without any matches between the two sets. The condition certainly holds if half of the season is finished, that is, each team has played all other teams at least once.
The method is called least squares since the above system of linear equations can be obtained as the optimal solution of a least squared errors estimation \citep{Gonzalez-DiazHendrickxLohmann2014, Csato2015a}.

The least squares rating vector $\mathbf{q}$ can be calculated recursively unless graph $G$ is regular bipartite \citep[Theorem~2]{Csato2015a}:
$\mathbf{q} = \lim_{k \to \infty} \mathbf{q}^{(k)}$, where
\begin{align} \label{eq_Buchholz_iteration}
\mathbf{q}^{(0)} & = (1/r) \mathbf{s}, \nonumber \\
\mathbf{q}^{(k)} & = \mathbf{q}^{(k-1)} + \frac{1}{r} \left[ \frac{1}{r} \left( r \mathbf{I} - \mathbf{L} \right) \right]^k \mathbf{s} \qquad \text{for all } k \geq 1,
\end{align}
In formula \eqref{eq_Buchholz_iteration}, $r$ is the maximal number of matches played by a team. Consequently, the entries of matrix $r \mathbf{I} - \mathbf{L}$ contain the number of matches played by the associated teams, assuming that a team has played against itself if it has played fewer matches than other teams.
Therefore, $\mathbf{q}^{(1)}$ modifies the normalised points due to the strength of the opponents, $\mathbf{q}^{(2)}$ accounts for the opponents of the opponents, and so on.

The ranking induced by $\mathbf{x}(\varepsilon)$ converges to the ranking from the least squares as $\varepsilon \to \infty$, hence the least squares is the other extremum of generalised row sum \citep[p.~326]{ChebotarevShamis1998a}.
The sum of points scored by the opponents is called the Buchholz point in chess and used as a tie-breaking criterion, justifying the alternative name \emph{recursive Buchholz} \citep{Brozos-VazquezCampo-CabanaDiaz-RamosGonzalez-Diaz2010}.

\begin{proposition}
The generalised row sum and least squares rankings satisfy self-consistency, invariance to cycles, home-away independence, and consistency.
\end{proposition}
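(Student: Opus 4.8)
The plan is to use that both rating vectors depend on the data only through the normalized score vector $\mathbf{s}$ and the Laplacian $\mathbf{L}$ (equivalently the matches matrix $\mathbf{M}$): the generalized row sum solves $[\mathbf{I} + \varepsilon \mathbf{L}] \mathbf{x}(\varepsilon) = \mathbf{s}$, and the least squares solves $\mathbf{L}\mathbf{q} = \mathbf{s}$ together with $\mathbf{e}^\top \mathbf{q} = 0$. Hence any transformation of the results that fixes both $\mathbf{s}$ and $\mathbf{L}$ fixes the two ratings, and therefore both induced rankings. I would record this observation first and then obtain invariance to cycles and home-away independence as immediate consequences. Reversing a three-cycle among $i, j, k$ leaves $\mathbf{M}$ untouched (the same pairs meet the same number of times) and leaves $\mathbf{s}$ untouched, because in either orientation each of the three teams records exactly one win and one loss inside the cycle, so its point total is unchanged---and this holds both under the symmetric win/loss convention and under the three-points rule. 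Swapping the venue of every match likewise changes neither outcomes nor pairings, so $(\mathbf{s}, \mathbf{L})$ is again fixed, and both axioms follow.

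For consistency I would argue the finished tournament exactly. When the round-robin is complete each pair has met equally often, so $\mathbf{L}$ is a positive multiple of $n\mathbf{I} - \mathbf{e}\mathbf{e}^\top$, and since $\mathbf{e}^\top \mathbf{s} = 0$ this gives $\mathbf{L}\mathbf{s} = \lambda \mathbf{s}$ for some $\lambda > 0$. Inserting the trial solutions $\mathbf{x}(\varepsilon) = c\mathbf{s}$ and $\mathbf{q} = c'\mathbf{s}$ into the defining systems yields $c = 1/(1 + \varepsilon \lambda) > 0$ and $c' = 1/\lambda > 0$, so each rating is a positive scalar multiple of $\mathbf{s}$ and therefore reproduces exactly the ranking from the number of points (all teams having played the same number of games). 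The remaining ``approaching'' claim I would support through the iteration \eqref{eq_Buchholz_iteration}: as the schedule fills in, the opponent-adjustment corrections $\mathbf{q}^{(k)} - \mathbf{q}^{(0)}$ become small relative to the leading term, moving the ranking continuously toward that of $\mathbf{s}$.

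The main obstacle is self-consistency, the property for which the generalized row sum was designed. I would rewrite the system row by row in its self-referential form
\[
(1 + \varepsilon \ell_{ii})\, x_i(\varepsilon) = s_i + \varepsilon \sum_{j \neq i} m_{ij}\, x_j(\varepsilon),
\]
and recover $\mathbf{x}(\varepsilon)$ as the limit of the iteration $x_i^{(t+1)} = \bigl( s_i + \varepsilon \sum_{j \neq i} m_{ij} x_j^{(t)} \bigr) / (1 + \varepsilon \ell_{ii})$, which contracts because each of its rows sums to $\varepsilon \ell_{ii} / (1 + \varepsilon \ell_{ii}) < 1$. The equal-treatment clause follows from the symmetry of these equations, and for the monotone clauses I would compare two teams $i$ and $j$ whose results dominate against correspondingly (weakly) stronger opponents and try to propagate the ordering $x_i^{(t)} \geq x_j^{(t)}$ through the iteration, with a strict gap surviving the limit under strict domination. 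This is the delicate point, because ``stronger opponents'' is measured by the very rating being computed, so a naive induction does not close: the opponent strength ordering must be carried along with the rating ordering. Resolving this circularity is exactly the content of the Chebotarev--Shamis self-consistency analysis, which I would invoke in this form. Finally, least squares inherits self-consistency as the $\varepsilon \to \infty$ limit of generalized row sum, since the weak inequalities defining self-consistency are closed under the limit and the strict comparisons persist.
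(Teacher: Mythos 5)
Most of your proposal coincides with the paper's (very terse) proof. The paper, too, handles invariance to cycles and home-away independence by exactly your observation that the ratings depend on the data only through $\mathbf{p}$ and $\mathbf{M}$, i.e.\ through $(\mathbf{s},\mathbf{L})$, and for the hard core of self-consistency it, like you, simply invokes Chebotarev and Shamis (1998, Theorem~5). Your exact computation for a completed round-robin is a nice self-contained substitute for the citations the paper uses for consistency (Chebotarev 1994, Property~3; Gonz\'alez-D\'iaz et al.\ 2014, Proposition~5.3): with $\mathbf{L}$ a positive multiple of $n\mathbf{I}-\mathbf{e}\mathbf{e}^\top$ and $\mathbf{e}^\top\mathbf{s}=0$, both $\mathbf{x}(\varepsilon)$ and $\mathbf{q}$ are indeed positive multiples of $\mathbf{s}$. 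But this covers only the ``coincide when finished'' half, and your heuristic for the ``approaching'' half is off as stated: the corrections in \eqref{eq_Buchholz_iteration} do \emph{not} become small as the schedule fills in---in the complete case $\bigl[ (1/r)(r\mathbf{I}-\mathbf{L}) \bigr]^k \mathbf{s} = \bigl( -1/(n-1) \bigr)^k \mathbf{s}$, so they remain comparable in size to the leading term; what saves the conclusion is that they become \emph{proportional to} $\mathbf{s}$, which is a different (and here unproven) claim.

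The genuine gap is your final step, deducing self-consistency of least squares from that of generalized row sum ``as the $\varepsilon \to \infty$ limit.'' Two things go wrong. First, strict inequalities do not survive limits: $x_i(\varepsilon) > x_j(\varepsilon)$ for all $\varepsilon$ yields only $q_i \geq q_j$, so the strict clauses of self-consistency (dominant results, or equal results against strictly stronger opponents, must produce a strictly higher rank) are not inherited, and your assertion that ``the strict comparisons persist'' is exactly what needs proof. Second, and more structurally, the hypotheses of self-consistency are evaluated with respect to the method's \emph{own} notion of strength---the circularity you correctly identified for the iteration resurfaces here: an instance where team $i$'s opponents dominate team $j$'s opponents under the least squares strengths need not be a domination instance for $\mathbf{x}(\varepsilon)$ at any finite $\varepsilon$, because least squares ties among opponents can be broken by generalized row sum in the adverse direction (the finite-$\varepsilon$ ranking refines the limit ranking). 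So the implication you want does not even have matching hypotheses on the two sides. The paper sidesteps all of this by citing the Chebotarev--Shamis theorem for both rankings at once; if you want a self-contained treatment of least squares, you must run the self-consistency analysis directly on $\mathbf{L}\mathbf{q}=\mathbf{s}$ rather than take limits.
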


\begin{proof}
\citet[Theorem~5]{ChebotarevShamis1998a} proves self-consistency.
Invariance to cycles holds because reversing all results along a cycle does not affect the points vector $\mathbf{p}$.
Home-away independence is guaranteed by disregarding the field of the game in the points vector $\mathbf{p}$ and the matches matrix $\textbf{M}$.
Consistency is verified for the generalised row sum by \citet[Property~3]{Chebotarev1994} and for the least squares by \citet[Proposition 5.3]{Gonzalez-DiazHendrickxLohmann2014}.
\end{proof}

In addition, these rankings are independent of the particular value $\alpha$ for wins in sports using a symmetric scoring rule. Analogously, the result is the same for the 3-1-0, 6-2-0, and 4-2-1 point systems, in other words, the rewards can be shifted and multiplied arbitrarily.

The generalised row sum and especially the least squares methods have several successful applications, including, among others, 
(a) international price comparisons by the Eurostat and OECD, where the least squares procedure is called the EKS (\'Eltet\H{o}-K\"oves-Szulc) method \citep{EurostatOECD2012};
(b) evaluating movies \citep{JiangLimYaoYe2011};
(c) deriving alternative quality of life indices \citep{Petroczy2021a};
(d) ranking the participants of the Eurovision Song Contest \citep{CaklovicKurdija2017};
(e) ranking universities on the basis of applicants' preferences \citep{CsatoToth2020};
(f) ranking historical go \citep{ChaoKouLiPeng2018} and tennis players \citep{BozokiCsatoTemesi2016}; and
(g) ranking teams in Swiss-system chess tournaments \citep{Csato2017c}.
Further details on their theoretical properties can be found in \citet{Chebotarev1994}, \citet{Shamis1994}, and \citet{Gonzalez-DiazHendrickxLohmann2014}.

\section{Application: ranking in the 2019/20 season of six major European soccer leagues} \label{Sec4}

The five major premier European soccer leagues have been chosen to illustrate how this parametric family of ranking methods work, together with the league in the Netherlands, which has been declared void on 25 April 2020. All of them have been suspended in March 2020 due to the coronavirus pandemic, when about 70-75\% of the matches in the 2019/20 season have already been played:
\begin{itemize}
\item
The \href{https://en.wikipedia.org/wiki/2019\%E2\%80\%9320_Premier_League}{Premier League} in England with 20 teams, 29 rounds finished except for four clubs, which have played only 28 games;
\item
The \href{https://en.wikipedia.org/wiki/2019\%E2\%80\%9320_Ligue_1}{Ligue 1} in France with 20 teams, 28 rounds finished except for the game Strasbourg vs.\ Paris Saint-Germain;
\item
The \href{https://en.wikipedia.org/wiki/2019\%E2\%80\%9320_Bundesliga}{Bundesliga} in Germany with 18 teams, 25 rounds finished except for the game Werder Bremen vs.\ Eintracht Frankfurt;
\item
The \href{https://en.wikipedia.org/wiki/2019\%E2\%80\%9320_Serie_A}{Seria A} in Italy with 20 teams, 26 rounds finished except for eight clubs, which have played only 25 games;
\item
The \href{https://en.wikipedia.org/wiki/2019\%E2\%80\%9320_Eredivisie}{Eredivisie} in the Netherlands with 18 teams, 26 rounds finished except for four clubs, which have played only 25 games;
\item
The \href{https://en.wikipedia.org/wiki/2019\%E2\%80\%9320_La_Liga}{La Liga} in Spain with 20 teams, 27 rounds finished.
\end{itemize}

\begin{table}[t]
  \centering
  \caption{Rankings of suspended European soccer leagues I. \\ \vspace{0.25cm}
  \footnotesize{Pts stands for the number of points. \\
  Teams with \textbf{bold} points have played one match fewer. \\
  Tie-breaking rules: fewer matches played, goal difference, goals scored. The generalised row sum and least squares methods do not require tie-breaking.}}
  \label{Table2}
  \rowcolors{1}{}{gray!20}
      \begin{tabularx}{\textwidth}{cccc C cccc C cccc} \toprule \hiderowcolors
    \multicolumn{4}{c}{\textbf{Italy}} &       & \multicolumn{4}{c}{\textbf{Netherlands}} &  &     \multicolumn{4}{c}{\textbf{Spain}} \\
    Pts   & \multicolumn{3}{c}{Value of $\varepsilon$} &       & Pts   & \multicolumn{3}{c}{Value of $\varepsilon$} &       & Pts   & \multicolumn{3}{c}{Value of $\varepsilon$} \\
          & 0.001 & 0.1   & $\to \infty$ &       &       & 0.001 & 0.1   & $\to \infty$       &        &       & 0.001 & 0.1   & $\to \infty$ \\ \bottomrule \showrowcolors
    63    & 1     & 1     & 1     &       & \textbf{56} & 1     & 1     & 1     &       & 58    & 1     & 1     & 1 \\
    62    & 2     & 2     & 2     &       & \textbf{56} & 2     & 2     & 2     &       & 56    & 2     & 2     & 2 \\
    \textbf{54} & 3     & 3     & 3     &       & \textbf{50} & 3     & 3     & 3     &       & 47    & 3     & 3     & 3 \\
    \textbf{48} & 4     & 4     & 4     &       & 49    & 4     & 4     & 4     &       & 46    & 5     & 5     & 5 \\
    45    & 5     & 5     & 5     &       & 44    & 5     & 5     & 5     &       & 46    & 4     & 4     & 4 \\
    39    & 6     & 6     & 6     &       & \textbf{41} & 6     & 6     & 6     &       & 45    & 6     & 6     & 6 \\
    36    & 9     & 9     & 9     &       & 41    & 7     & 7     & 7     &       & 42    & 7     & 7     & 7 \\
    \textbf{35} & 8     & 8     & 8     &       & 36    & 8     & 8     & 9     &       & 38    & 9     & 9     & 9 \\
    \textbf{35} & 7     & 7     & 7     &       & 35    & 9     & 9     & 8     &       & 38    & 8     & 8     & 8 \\
    34    & 10    & 10    & 11    &       & 33    & 10    & 10    & 10    &       & 37    & 10    & 10    & 10 \\
    \textbf{32} & 12    & 12    & 12    &       & 33    & 11    & 11    & 11    &       & 34    & 11    & 11    & 11 \\
    \textbf{32} & 11    & 11    & 10    &       & 32    & 12    & 12    & 12    &       & 33    & 12    & 12    & 12 \\
    30    & 13    & 13    & 13    &       & 28    & 13    & 13    & 15    &       & 33    & 13    & 13    & 13 \\
    28    & 15    & 14    & 14    &       & 27    & 14    & 14    & 13    &       & 32    & 14    & 14    & 14 \\
    \textbf{27} & 14    & 15    & 15    &       & 26    & 15    & 15    & 14    &       & 29    & 15    & 15    & 15 \\
    \textbf{26} & 16    & 16    & 16    &       & 26    & 16    & 16    & 16    &       & 27    & 16    & 16    & 16 \\
    25    & 17    & 17    & 17    &       & 19    & 17    & 17    & 17    &       & 26    & 17    & 17    & 17 \\
    25    & 18    & 18    & 18    &       & 15    & 18    & 18    & 18    &       & 25    & 18    & 18    & 18 \\
    18    & 19    & 19    & 19    &       &       &       &       &       &       & 23    & 19    & 19    & 19 \\
    16    & 20    & 20    & 20    &       &       &       &       &       &       & 20    & 20    & 20    & 20 \\ \bottomrule
    \end{tabularx}
\end{table}

\begin{table}[t]
  \centering
  \caption{Rankings of suspended European soccer leagues II. \\ \vspace{0.25cm}
  \footnotesize{Pts stands for the number of points. \\
  Teams with \textbf{bold} points have played one match fewer. \\
  Tie-breaking rules: fewer matches played, goal difference, goals scored. The generalised row sum and least squares methods do not require tie-breaking.}}
  \label{Table3}
  \rowcolors{1}{}{gray!20}
    \begin{tabularx}{\textwidth}{cccc C cccc C cccc} \toprule \hiderowcolors
    \multicolumn{4}{c}{\textbf{England}}   &       & \multicolumn{4}{c}{\textbf{France}}    &       & \multicolumn{4}{c}{\textbf{Germany}}   \\
    Pts   & \multicolumn{3}{c}{Value of $\varepsilon$} &       & Pts   & \multicolumn{3}{c}{Value of $\varepsilon$} &       & Pts   & \multicolumn{3}{c}{Value of $\varepsilon$} \\
          & 0.001 & 0.1   & $\to \infty$ &       &       & 0.001 & 0.1   & $\to \infty$ &       &       & 0.001 & 0.1   & $\to \infty$ \\ \bottomrule \showrowcolors
    82    & 1     & 1     & 1     &       & \textbf{68} & 1     & 1     & 1     &       & 55    & 1     & 1     & 1 \\
    \textbf{57} & 2     & 2     & 2     &       & 56    & 2     & 2     & 2     &       & 51    & 2     & 2     & 3 \\
    53    & 3     & 3     & 3     &       & 50    & 3     & 3     & 4     &       & 50    & 3     & 3     & 2 \\
    48    & 4     & 4     & 4     &       & 49    & 4     & 4     & 3     &       & 49    & 4     & 4     & 4 \\
    45    & 5     & 5     & 5     &       & 41    & 5     & 5     & 5     &       & 47    & 5     & 5     & 5 \\
    \textbf{43} & 6     & 6     & 6     &       & 41    & 6     & 6     & 8     &       & 37    & 6     & 6     & 6 \\
    43    & 7     & 7     & 7     &       & 40    & 9     & 9     & 9     &       & 36    & 8     & 9     & 9 \\
    41    & 9     & 8     & 8     &       & 40    & 7     & 7     & 6     &       & 36    & 7     & 8     & 8 \\
    \textbf{40} & 8     & 9     & 9     &       & 40    & 8     & 8     & 7     &       & 35    & 9     & 7     & 7 \\
    39    & 10    & 10    & 10    &       & 39    & 11    & 10    & 10    &       & 32    & 10    & 10    & 10 \\
    39    & 11    & 11    & 11    &       & \textbf{38} & 10    & 11    & 12    &       & 30    & 11    & 11    & 11 \\
    37    & 12    & 12    & 12    &       & 37    & 13    & 13    & 13    &       & \textbf{28} & 12    & 12    & 12 \\
    35    & 13    & 13    & 13    &       & 37    & 12    & 12    & 11    &       & 28    & 13    & 14    & 14 \\
    34    & 14    & 14    & 14    &       & 34    & 15    & 15    & 15    &       & 27    & 14    & 13    & 13 \\
    29    & 15    & 16    & 16    &       & 34    & 14    & 14    & 14    &       & 26    & 15    & 15    & 15 \\
    27    & 16    & 15    & 15    &       & 30    & 16    & 16    & 16    &       & 22    & 16    & 16    & 16 \\
    27    & 17    & 17    & 17    &       & 30    & 17    & 17    & 17    &       & \textbf{18} & 17    & 17    & 17 \\
    27    & 18    & 18    & 18    &       & 27    & 18    & 18    & 18    &       & 16    & 18    & 18    & 18 \\
    \textbf{25} & 19    & 19    & 19    &       & 23    & 19    & 19    & 19    &       &       &       &       &  \\
    21    & 20    & 20    & 20    &       & 13    & 20    & 20    & 20    &       &       &       &       &  \\ \bottomrule
    \end{tabularx}
\end{table}

Tables~\ref{Table2} and \ref{Table3} report the current number of points for each team, their rankings by the generalised row sum method with two values of $\varepsilon$, as well as by the least squares method. In the case of the smallest parameter, $\varepsilon = 0.001$, the strength of opponents serves only to break the ties in the normalised points.
%While it would be difficult to prove that the same rank for these particular values of $\varepsilon$ extends to all possible choices, it is very likely.

The first and most crucial observation is the robustness of the rankings. This stands in stark contrast to Swiss-system chess team tournaments, where accounting for the opponents can substantially affect the ranking, particularly for the middle teams \citep{Csato2017c}. The probable reason is the large number of matches already played. It is a favorable finding, which indicates that freezing the current standing---with the tie-breaking rules according to Tables~\ref{Table2} and \ref{Table3}---or using the quotient rule (normalised points) would be a relatively fair solution.

The ideal case is represented by Spain: if the ties in the number of points are resolved through the strength of opponents ($\varepsilon \to 0$), then the ranking does not depend on the weight of the adjustment at all. There is no need to consider the number of matches played in La Liga.
Compared to the ranking induced by the number of points, the only change is on the fourth position as officially Real Sociedad overtakes Getafe due to more goals scored, but the latter team has faced stronger opponents. This difference has fundamental sporting effects since the fourth club automatically qualifies for the group stage of the UEFA Champions League, and the fifth goes only to the UEFA Europa League.

While some top clubs in the Netherlands have played fewer matches, this does not influence the usual ranking with goal difference as the tie-breaking rule. Crucially, the parameter responsible for adjusting the normalised points due to the strength of the opponents does not change the ranking except for certain middle teams, which has not much sporting effect. This is important because Ajax Amsterdam---the team having the most ($56$) points in a tie with AZ Alkmaar, favoured by goal difference---has not been declared the champion but it has obtained the only slot for the Netherlands in the group stage of the 2020/21 UEFA Champions League, while the second club has entered the second qualifying round in the league path, therefore it should have won three clashes to reach the group stage. Interestingly, AZ Alkmaar is preferred to Ajax by the eigenvector-based method of \citet{LambersSpieksma2020b}. However, the remaining opponents of Ajax scored 321 points, while the remaining opponents of AZ Alkmaar scored 335 points, which supports the conjecture that Ajax had a more difficult schedule in the matches already played.

Similarly, the value of $\varepsilon$ barely affects the ranking in Italy. Disregarding tie-breaking, the changes are on the 10th and the 14th positions, which are almost insignificant because they influence neither qualification for European cups nor relegation. On the other hand, the seventh place---which provides a slot in the second qualifying round of the UEFA Europa League---should be given to one of the teams that scored 35 points, Hellas Verona or Parma as the additional match played by Milan is seemingly responsible for its advantage of one point. Although Parma has a worse goal difference, this club has faced stronger opponents than Hellas Verona.

The situation in the Premier League is analogous to Serie A. While Arsenal has one point less ($40$) compared to Tottenham Hotspur, this is compensated by the fewer number of games played for small values of $\varepsilon$. The generalised row sum method also exchanges Brighton \& Hove Albion with West Ham United on the insignificant 15th place.

Turning to the Bundesliga, the $13$th place has minimal sporting effects. The seventh team qualifies for the second qualifying round of the UEFA Europa League because the cup winner, Bayern Munich, is one of the first six clubs. Furthermore, as Example~\ref{Examp2} has explained, RB Leipzig can be the runner-up instead of Borussia Dortmund if the strength of the opponents receives a considerable role in the ranking, that is, for large values of $\varepsilon$. Nonetheless, both the second- and the third-placed clubs enter the group stage of the UEFA Champions League.

The ranking in Ligue 1---a league that has never been finished---is perhaps the most challenging. Since Paris Saint-Germain (first with 68 points) has won both the \href{https://en.wikipedia.org/wiki/2019\%E2\%80\%9320_Coupe_de_France}{domestic cup} and the \href{https://en.wikipedia.org/wiki/2019\%E2\%80\%9320_Coupe_de_la_Ligue}{French league cup competition} and the UEFA Europa League titleholder, the Spanish Sevilla has qualified for the UEFA Champions League through its domestic league, the third-placed club has qualified for the Champions League group stage, the fourth- and the fifth-placed clubs have gone to the group stage of the Europa League, and the sixth-placed club has received a slot in the second qualifying round of the Europa League.

If the strength of the opponents takes a large weight, then Lille with 49 points is ranked above Rennes with 50 points and obtains the Champions League slot, while Montpellier with 40 points is ranked above Nice with 41 points and goes to the Europa League. The issue of Rennes and Lille has been highlighted in \citet{Guyon2020b}: Lille played and lost twice against the dominating Paris Saint-Germain while Rennes faced it once, and the number of points scored so far by their future opponents is 347 for Lille but 379 for Rennes (and even 381 or 382 if accounting for the missing clash Strasbourg vs.\ Paris Saint-Germain from the 29th round).

Finally, recall that the least squares rating vector can be computed recursively according to formula \eqref{eq_Buchholz_iteration}. Accounting for the strength of direct opponents, that is, vector $\mathbf{q}^{(1)}$ almost always induces the same ranking as the final rating vector $\mathbf{q}$, the only exception being that Lyon with 40 points in Ligue 1 is ranked above Angers having 39 points only after considering the opponents of opponents (in addition, the ranking according to $\mathbf{q}^{(2)}$ is different from the ranking induced by $\mathbf{q}^{(1)}$, $\mathbf{q}^{(3)}$, and $\mathbf{q}$ in the 2019/20 Premier League).
In our examples, primarily the direct impact of the opponents count. Consequently, it is not necessary to calculate the inverse of any matrix, and one matrix multiplication can be sufficient to determine a fair ranking. 

\section{Policy implications} \label{Sec5}

Ranking in a suspended round-robin league is a non-trivial problem because of the need to account for schedule imbalances and possibly for the different number of matches played by the teams. The quotient rule addresses the second issue but does not reflect the strength of opponents. Furthermore, if most teams have played the same number of matches in an unfinished season, which is the likely case, then tie-breaker criteria are required to be used almost as often as in any round-robin league to assign unique ranks to competitors that are equal on points.

However, both goal difference (winning margin) and head-to-head records are dubious rules in the presence of unplayed matches: the former favours teams that have played against weaker teams, and the latter is strongly influenced by the home-away pattern of the games played. For instance, see the tie-breaking decisions in the 2019/20 season of the French Ligue 1 \citep{Ligue1_2020}. Nice was fifth ahead of Reims due to better head-to-head results over their home and away matches, although Reims had a higher goal difference. Brest was ranked above Metz because of superior goal difference as they played only one match. Since the original tie-breaking rule at the end of the season would have been goal difference, the choice of the French top soccer league seems to be somewhat arbitrary.

Regarding the proposed generalised row sum method, even though it is difficult to argue for any particular value of $\varepsilon$, we think it is best to give the largest possible weight to the role of opponents. At least, there is some evidence for supporting the least squares ranking in Swiss-system chess team tournaments \citep{Csato2017c}. Since this procedure does not contain a parameter, it would be a better alternative than the generalised row sum, which may lead to such disputes.

Therefore it is recommended to choose from the following solutions, listed in decreasing order of complexity and preference:
\begin{enumerate}
\item
Least squares method;
\item
Generalised row sum method with a small $\varepsilon$ that breaks only the ties remaining in the normalised points;
\item
Quotient rule (points per game) with the tie-breaking criteria of goal difference, goals scored;
\item
Number of points with the tie-breaking criteria of fewer matches played, goal difference, goals scored.
\end{enumerate}

\section{Conclusions} \label{Sec6}

The problem of ranking the teams in an incomplete round-robin tournament has been discussed. We have taken an axiomatic approach to identify the key properties of real-world rankings. A family of ranking methods has been shown to satisfy the desired requirements. It has been applied for the 2019/20 season of six major European soccer leagues that were suspended after the outbreak of coronavirus.

Accounting for the strength of opponents turns out to be necessary for incomplete tournaments. This factor can be considered only by introducing an additional parameter. However, its value has only marginal influence in all leagues considered here as the set of clubs to be relegated is entirely independent of this choice, and qualification to the European cups is barely influenced.

The adjustment of the standard scores is carried out in two steps: the correction for the number of matches played (normalised points) is followed by accounting for the strength of the opponents. This provides a simple, robust, and scientifically well-established way to obtain a fair final standing in a round-robin league that cannot resume.

Naturally, ranking in a sports league is a zero-sum game, thus any solution will prefer certain teams compared to an alternative regime. Future rulebooks should explicitly define what happens if a league has to be finished without playing all matches. Our paper can contribute to single out a suitable policy for this purpose.

\section*{Acknowledgements}
\addcontentsline{toc}{section}{Acknowledgements}
\noindent
My father (also called \emph{L\'aszl\'o Csat\'o}) helped in executing the numerical calculations. \\
We are deeply indebted to \emph{Julien Guyon} for inspiration and \emph{Manel Baucells} for useful comments. \\
%We are grateful to \emph{Tam\'as Halm} for useful advice. \\
Four anonymous reviewers provided valuable comments and suggestions on an earlier draft. \\
We are grateful to the \href{https://en.wikipedia.org/wiki/Wikipedia_community}{Wikipedia community} for collecting and structuring invaluable information on the sports tournaments discussed. We found---and corrected---two mistakes (Nantes vs.\ Angers was 1-2 instead of 1-0 and Monaco vs.\ Nimes was not annulled but 2-2) in the \href{https://en.wikipedia.org/wiki/2019\%E2\%80\%9320_Ligue_1\#Results}{table summarising the results of the 2019/20 French Ligue 1} on 29 April 2020. \\
The research was supported by the MTA Premium Postdoctoral Research Program grant PPD2019-9/2019.

\bibliographystyle{apalike}
\bibliography{All_references}

\end{document}